\keywords{adaptive, multi-dimensional, integration, deterministic}
\begin{document}

\title{PAGANI: A Parallel Adaptive GPU Algorithm for Numerical Integration}

\author{Ioannis Sakiotis}
\affiliation{%
	\institution{Old Dominion University}
	\city{Norfolk}
	\state{Virginia}
	\country{USA}
}

\author{Kamesh Arumugam}
\affiliation{%
	\institution{NVIDIA}
	\city{Santa Clara}
	\state{California}
	\country{USA}
}

\author{Marc Paterno}
\affiliation{%
	\institution{Fermi National Accelerator Laboratory}
	\city{Batavia}
	\state{Illinois}
	\country{USA}
}

\author{Desh Ranjan}
\affiliation{%
	\institution{Old Dominion University}
	\city{Norfolk}
	\state{Virginia}
	\country{USA}
}

\author{Bal{\v s}a Terzi\'c}
\affiliation{%
	\institution{Old Dominion University}
	\city{Norfolk}
	\state{Virginia}
	\country{USA}
}

\author{Mohammad Zubair}
\affiliation{%
	\institution{Old Dominion University}
	\city{Norfolk}
	\state{Virginia}
	\country{USA}
}

\thanks{Work supported by the Fermi National Accelerator Laboratory, managed and operated by Fermi Research Alliance, LLC under Contract No. DE-AC02-07CH11359 with the U.S. Department of Energy. The U.S. Government retains and the publisher, by accepting the article for publication, acknowledges that the U.S. Government retains a non-exclusive, paid-up, irrevocable, world-wide license to publish or reproduce the published form of this manuscript, or allow others to do so, for U.S. Government purposes.

FERMILAB-CONF-21-081-SCD}

\thanks{This work is supported by Jefferson Science Associates, LLC, under U.S. Department of Energy (DOE) Contract No. DE-AC05-06OR23177.}

\subjclass[2010]{Primary 65Y05; Secondary 65D30}

\begin{abstract}

We present a new adaptive parallel algorithm for the challenging problem of multi-dimensional numerical integration on massively parallel architectures. Adaptive algorithms have demonstrated the best performance, but efficient many-core utilization is difficult to achieve because the adaptive work-load can vary greatly across the integration space and is impossible to predict a priori.
Existing parallel algorithms utilize sequential computations on independent processors, which results in bottlenecks due to the need for data redistribution and processor synchronization.
Our algorithm employs a high-throughput approach in which all existing sub-regions are processed and sub-divided in parallel.
Repeated sub-region classification and filtering improves upon a brute-force approach and allows the algorithm to make efficient use of computation and memory resources.
A CUDA implementation shows orders of magnitude speedup over the fastest open-source CPU method and extends the achievable accuracy for difficult integrands. Our algorithm typically outperforms other existing deterministic parallel methods.

\end{abstract}

\begin{CCSXML}
	<ccs2012>
	<concept>
	<concept_id>10010147</concept_id>
	<concept_desc>Computing methodologies</concept_desc>
	<concept_significance>500</concept_significance>
	</concept>
	</ccs2012>
\end{CCSXML}

\ccsdesc[500]{Computing methodologies}

\maketitle

\section{Introduction}

The capability to perform multi-dimensional numerical integration is a recurrent need in applications across various fields such as finance, physics and computer graphics. Examples of such applications include risk management analysis and ray-tracing computations \cite{finance}\cite{jarosz08thesis}. Applications of particular interest to the authors are parameter estimation in cosmological models of galaxy clusters and simulation of beam dynamics \cite{cosmosis}\cite{beambeam}. Two types of computational methods are typically used for numerical integration - deterministic and probabilistic. The two most desired features of a good computational method are accuracy and speed. Additionally, since none of the methods will produce accurate results all the time, it is equally important that the methods provide a reasonable error-estimate on the integral estimates they compute. Most deterministic integration methods are quadrature based. Usually, the quadrature rules used to estimate the integral in a region require evaluation of the integrand on a number of points that grows exponentially with the number of dimensions. While this fact renders the deterministic algorithms unsuitable for very high dimensions, where probabilistic algorithms can still be employed to some extent, our experiments have shown that on a CPU-platform, probabilistic algorithms such as Vegas, Suave, and Divonne are consistently outperformed by a deterministic algorithm like Cuhre on a wide variety of integrals of moderate dimensions \cite{Lepage}\cite{origDCUHRE}\cite{cuba}. Furthermore, deterministic quadrature methods are of particular importance to the experimental sciences, due to their computation of error-estimates that provide some transparency and confidence in the quality of integration results.

This paper presents a deterministic parallel algorithm for multi-dimensional numerical integration suitable for GPUs that outperforms existing deterministic methods.

In principle, numerical integration can be carried out very simply - divide the integration region into ``many'' ($m$) smaller sub-regions, ``accurately'' estimate the integral in each sub-region individually ($I_i$) and simply add these estimates to get an estimate for the integral over the entire region ($\Sigma_{i=1}^{m} I_{i}$).
If we use a simple way of creating the sub-regions, e.g. via dividing each dimension into $d$ equal parts, the boundaries of the sub-regions are easy to calculate and, if $d$ is large enough, one might expect the integrand to not vary too much within each of the $d^n$ individual smaller regions hence  it easy to estimate the integral. Not only is this method simple, it is ``embarrassingly parallel'' as integral estimates for each region can be computed completely independently. Unfortunately, this approach is infeasible for higher dimensions as $d^n$ grows exponentially with $n$. For example if $n = 10$ and we need to split each dimension into $d = 20$ parts the number of sub-regions created would be $20^{10}$ which is roughly $10^{13}$. Moreover, uniform division of the integration region is not the best way to estimate the integral. The intuition is that the regions where the integrand is ``well-behaved'' do not need to be sub-divided finely to get a good estimate of the integral. Regions where it is ``ill-behaved'' (e.g. sharp peaks, many oscillations) require finer sub-division for a reliable, accurate estimate. However, when devising a general numerical integration method, we cannot assume knowledge of the integrand's behavior. Hence, we cannot split up the integration region in advance with fewer (but perhaps larger in volume) sub-regions where the integrand is ``well-behaved'' and greater number of smaller sub-regions in the region where it is ``ill-behaved''. These two reasons provide motivation for designing adaptive numerical-integration methods where the determination of the behavior of the integrand in a region and creation of sub-regions is done adaptively. A sub-region is split into smaller regions only if it is deemed that the integral estimate calculated for the sub-region is not accurate enough. 

Several adaptive integration methods have been designed and implemented in the past and are available for use (Cuba, QUADPACK, NAG, and MSL) \cite{nCubreIntegration} \cite{cuba} \cite{Berntsen1991AnAM} \cite{piessens2012quadpack} \cite{NAG}. Most of these, however, can require prohibitively long computation times for high-dimensional integrands, especially when requiring high-levels of accuracy \cite{korig}. 

As before, any adaptive integration method can itself be parallelized simply by partitioning the region into sub-regions (in some simple fashion) and applying the adaptive integration method to each sub-region in parallel. The challenge arises when the behavior of the integrand varies over the entire region and the simple partitioning results in sub-regions with very different computational requirements. This workload imbalance, as illustrated in Figure  \ref{fig:ComputationProblem}, makes effective utilization of multiple processing cores difficult to achieve, regardless of a CPU or GPU platform. This necessitates the development of efficient parallel adaptive multi-dimensional integration algorithms and implementations on emerging high-performance architectures. 

\begin{figure}[t]
	\centering
	\includegraphics[width=\linewidth]{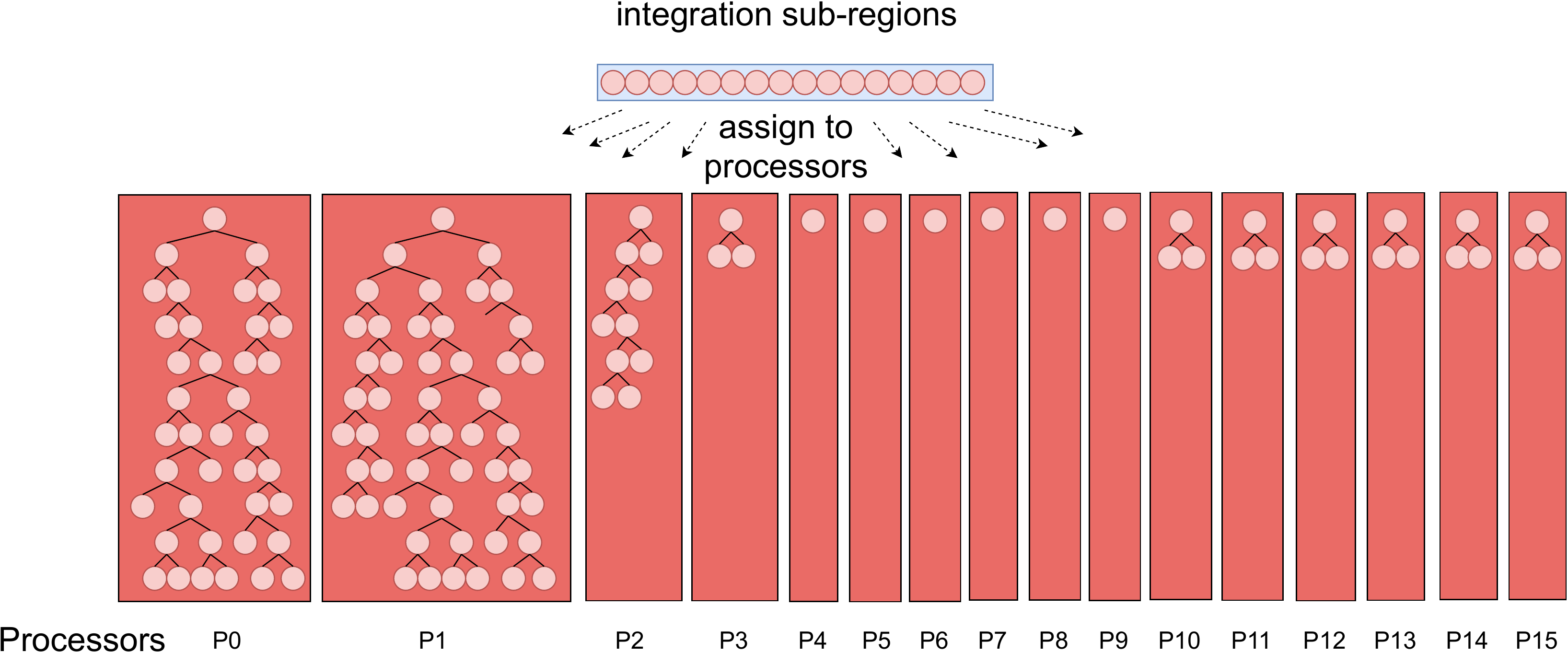}
	\caption{Parallelization of adaptive integration routines, requires the assignment of sub-regions to parallel processors. The impossible to predict work-load imbalance is evident on processors $0$ and $1$, which reach a greater depth on the sub-region tree and perform far more sub-divisions.}
	\label{fig:ComputationProblem}
\end{figure}

Parallel algorithms for multi-core platforms exist but are simple extensions of the sequential algorithms and do not address the load-balancing issues, while open-source CPU implementations expose limited parallelism \cite{dcuhre} \cite{origDCUHRE} \cite{concurrentCuba}. As a result, various algorithms for many-core systems have been developed to exploit a greater degree of parallelism in conjunction with load-balancing techniques. While these algorithms are better suited to highly-parallel architectures, they still execute a sequential adaptive integration algorithm, albeit concurrently on independent processors. The challenge with such an approach is that each processor cannot determine the achieved accuracy at the global level without synchronization or intra-processor communication. Without such information, the processors cannot decide when to terminate their local execution or predict how many resources they require. This leads to either costly synchronization/data redistribution or devoting significant computational resources (both execution time and memory use) to regions that may not contribute significantly to either the estimate of the integral or the error estimate.

We propose a new deterministic, parallel adaptive algorithm for multi-dimensional integration for massively parallel architectures. It is inspired by the Cuhre method of the Cuba library first introduced in \cite{cuba} and its parallel GPU-adaptation \cite{korig} \cite{korig2}. Unlike other parallel methods such as \cite{korig} \cite{korig2}, the proposed PAGANI algorithm does not utilize the common sequential scheme seen in adaptive integration. This avoids sequential computations in favor of greater parallelization. PAGANI further differs from \cite{korig} \cite{korig2} in some key aspects, specifically those pertaining to handling the termination condition and load-balancing. Our tests on a standard test suite of integrals show that PAGANI is at least as accurate as Cuhre and its parallel adaptations in \cite{korig} and \cite{korig2}, performs orders of magnitude faster than the serial implementation of Cuhre if the integral is not computationally trivial and can thus be used as a viable alternative on challenging integrands, especially when moderate/high digits of precision are required.

The remainder of this paper is structured as follows. In section II we briefly present related sequential and parallel methods. In section III we present the new PAGANI algorithm, while related performance findings are described in Section IV. Finally, we summarize our findings in Section V and discuss future work.

\section{Background}

There are several techniques for multi-dimensional numerical integration besides deterministic quadrature. The Cuba library provides Monte Carlo methods such as Vegas, Suave, and Divonne \cite{Lepage}\cite{cuba}. Sparse Grids methods  with adaptive and non-adaptive variants (\cite{actaNumerica}\cite{Griebel}\cite{Bungartz}\cite{Gerstner}) as well as Quasi-Monte Carlo methods (\cite{Goda2019RecentAI}\cite{NuyensQMC}\cite{MCANDQMC}) have been introduced in existing literature, with promising results. Nonetheless, the computation of error-estimates is critical for the problems targeted by Pagani. As such, we only discuss and compare libraries that provide such error-estimates, with a focus on GPU implementations.    

The adaptive quadrature algorithms discussed in this paper, typically share the control flow presented in Algorithm \ref{alg:adaptiveAlg}. Given a function $f$ and the integration bounds $b$ on $n$ dimensions, the algorithm operates on a region-list $H$ that holds the estimated values for the integral, error, and bounds of all regions. Initially, only a single region exists spanning the user-specified integration bounds $b$. As the algorithm progresses, a subset of regions from $H$ are extracted and are all split into $k$ sub-regions by sub-dividing one of the $n$ axes. The integral and error estimates as well as the axis for future splits is calculated for newly generated sub-regions and these sub-regions are then inserted back into $H$, replacing their ``parent'' regions. 
The algorithm stops when the termination condition is met; this is usually based on a maximum number of iterations/function evaluations and/or computing a sufficiently small cumulative error. 

\begin{algorithm}
    \caption{Sequential Adaptive Integration Algorithm}
     \begin{algorithmic}[1]
        \Procedure{Adaptive Numerical Integration}{$f$, $b[n]$}
            \State $H \leftarrow b$ 
             \While{(termination condition is not satisfied)}
                \State Extract a non-empty subset of regions $S$ from $H$ 
                \For{each region $R \in S $}
                    \State partition $R$ into $k$ regions along split-axis 
                    \State Let $R_1,R_2\ldots R_k$ be these $k$ regions
                     \For{$ i \leftarrow 1 \ldots k$}  
                        \State $IntEst$($R_i$) $\leftarrow$ Integral estimate for $R_i$
                        \State $ErrEst$($R_i$) $\leftarrow$ Error estimate for $R_i$
                        \State $axis_i \leftarrow$ split-axis for $R_i$  
                        \State Insert $R_i$ into $H$
                    \EndFor
                \EndFor
             \EndWhile 
            \State $GlobalIntEst \leftarrow \Sigma_{R_i \in H}\,\,IntEst(R_i)$
            \State $GlobalErrEst \leftarrow \Sigma_{R_i \in H}\,\,ErrEst(R_i)$
            \State {\bf return} ($GlobalIntEst$, $GlobalErrEst$)
        \EndProcedure
    \end{algorithmic}
    \label{alg:adaptiveAlg}
\end{algorithm}

The different algorithms in this framework differ in how the subset of regions to be further partitioned is selected in line 4, how it is split in line 7, how the integral and error estimates and axis for further split are computed in line 9, and 10 and most importantly what termination criteria is used in line 3. For brevity, we will refer to computing integral, error estimates and the split axis for a region as "evaluating" the region. Note that each region evaluation requires multiple evaluations of the integrand/function. The execution of these algorithms can be visualized as a dynamic tree where the nodes of the tree represent regions. The leaves of the tree represents the list of "active regions" ($H$). When a region $R$, represented by leaf $L$, is split into into $k$ sub-regions, we extend the tree by adding $k$ children to leaf $L$.

It is important to note that all these algorithms are heuristics. They return integral and error estimates that, while usually working well in practice, are not theoretically guaranteed to be accurate. With this in mind, our goal was to design an algorithm that is suitable for fast execution on GPUs, and that is at least as accurate as existing algorithms.

\subsection{Cuba: Cuhre}

Cuhre is the fastest deterministic open-source adaptive quadrature method, whose efficiency is attributed to its low number of required functions evaluations to compute "reasonably good" integral and error estimates for a region. The algorithm uses the Genz-Malik cubature rules to compute which axis to sub-divide, as well as the integral and error estimates \cite{GMCubatureRules}. For an $n$-dimensional region, these rules require $2^n +\Theta(n^3)$ function evaluations whereas the Gauss-Kronrod method requires $15^n$ evaluations. Cuhre closely follows Algorithm \ref{alg:adaptiveAlg}. In line 4, it selects the region in $H$ with maximum error-estimate to split each time and in line 6 it partitions it into two equal halves along the chosen axis. It terminates when the estimate of the cumulative relative-error is smaller than the user-specified relative-error tolerance or upon reaching the maximum number of pre-specified function evaluations. More precisely, the termination condition is $\frac{e}{v} \leq \tau_{rel}$, where $v$ is the global integral estimate, i.e. the sum of integral estimates of the active regions, $e$ is the global error estimate i.e. sum of error estimates of active regions, while $\tau_{rel}$ is the user-specified relative-error tolerance. Cuhre provides an error-estimate by computing the difference in the integral estimates yielded by four cubature rules of different degrees. The accuracy of this error-estimate is further improved by utilizing the two-level error-estimate computation introduced in \cite{twolevelerrorest}. 
    
\subsection{Parallel Adaptive Quadrature Methods}

Adaptive quadrature algorithms such as Cuhre, are easily parallelizable since the function evaluations required to compute the integral and error estimates within a sub-region are independent. This is exploited by Cuhre as implemented in the Cuba library, which allows the concurrent evaluation of the integrand at different points through the fork/wait POSIX functions. Adaptive methods can be parallelized further since there is no dependency amongst the sub-regions, allowing for evaluation of the individual regions independently without any need for synchronization. Parallelizing both at function and sub-region evaluation levels is important for achieving maximum speedup. Good performance is also predicated on having a balanced work-load among the processors, lacking which certain processors may operate on problematic regions while other processors with ``easier'' sub-regions could terminate early and remain idle. Such behavior can limit the benefit of parallelizing at the sub-region level and its mitigation could potentially require costly synchronization between iterations for data redistribution.

Existing parallel algorithms and their implementations vary in the set of cubature rules utilized, host/device memory utilization, data-structure for $H$, and method for load-balancing. The common characteristic across all known existing parallel methods, is that each processor eventually executes a variant of Algorithm \ref{alg:adaptiveAlg}. This has several disadvantages: (i) As noted earlier, it typically leads to poor load-balancing (ii) use of local data-structures that are accessed and maintained with local resources (memory, extent of parallelization), and (iii) the termination at individual processors is often due to reaching a maximum number of function evaluations and not due to satisfying the error-tolerance. This is necessary as any relative-error check on a local processor's sub-regions cannot guarantee anything about the global relative-error without intra-processor synchronization. Our algorithm is not of this nature and addresses those disadvantages by adopting a ``global'' approach, where the full extent of memory and parallel resources are utilized for all procedures and no locality-related restrictions apply.

\subsubsection{Two-Phase Quadrature}

The parallel algorithm introduced in \cite{korig} and refined in \cite{korig2}, operates in two phases. In phase I, the algorithm attempts to generate a sufficiently large sub-region list based on the available memory resources of the GPU. Once the size of $H$ is sufficiently large, phase II starts and each processor concurrently executes the sequential Cuhre algorithm without synchronization or any intra-processor communication. The effectiveness of this approach is attributed to the load-balancing achieved in phase I, where each region whose relative-error is no greater than the user-specified relative-error tolerance $\tau_{rel}$ is not sub-divided further (local termination) and does not participate in phase II; the algorithm discards those ``finished'' regions from memory after accumulating their contributions to the total integral and error estimates. This balances the workload since only regions with large relative-errors are expected to require a significant number of further sub-divisions, and only those computationally intense sub-regions will reach the resource-demanding phase II. 

When $\tau_{rel}$ is small (i.e. high precision is desired), the local termination condition is difficult to satisfy which often leads to all regions reaching phase II, effectively performing no workload-balancing at all. Additionally, marking regions as finished based on their relative-error can cause problems for integrands that take both negative and positive values (details in section 3.5). Finally, the algorithm utilizes the two-level error-estimate in phase II but not in phase I. This can lead to some speedup, but at the risk of over-estimating the algorithm's achieved accuracy, especially if the algorithm terminates before reaching phase II.   

\subsubsection{Heuristic Adaptive Quadrature}

A different approach was proposed in \cite{heuristicQuadrature}, where a heuristic instead of the estimated error determines the bounds of the regions that will be processed. The integrand smoothness is determined by approximating the first and second derivatives based on central differences. Large second derivatives indicate integrand variation and the need for small intervals. The heuristic yielded accurate results and demonstrated reduced number of function evaluations and improved speedup on a two-dimensional case. It is not clear if this approach remains effective on higher dimensions or in cases where the integrands have  discontinuities or sharp peaks.

\subsubsection{2D Grid of Heaps}

Another approach was presented in \cite{multicoreQuadrature}, to address the poor load balancing associated with independent processors on isolated partitions of the integration space. The processors were spatially organized on a 2D grid, where each processor operated on an isolated heap of sub-regions. Load-balancing was managed by allowing the processors to re-distribute sub-regions with their four-neighbors in the case of imbalance. Like the two-phase method, this approach is still burdened with the lack of globally aware local termination conditions and need to synchronize and redistribute sub-regions. 
    
\subsection{Quasi-Monte Carlo}

A quasi-Monte Carlo integration library was presented in \cite{qmcGPU}, allowing execution on both multi-core CPU platforms as well as GPUs through a CUDA implementation. This library merges quasi-Monte Carlo integration with importance sampling and displayed improved performance over other Monte Carlo based methods such as Vegas. Similar to Cuhre, the QMC method in this library tries to compute the integral to within a user-input relative error tolerance $\tau_{rel}$. Unlike other quasi-Monte Carlo implementations, this method returns both an integral and error estimate making it suitable for comparison with PAGANI.    
    
\section{PAGANI}

We propose a parallel adaptive algorithm that is inspired by phase I of \cite{korig} but avoids phase II and its associated problems. Our algorithm attempts to achieve greater precision and convergence rates compared to other parallel methods by iteratively sub-dividing all regions in parallel instead of the top-$k$ with the worst error-estimate. A key difference between PAGANI and existing parallel methods is that the processors are not restricted to a set of sub-regions or partition of integration-space. Instead, we perform a new 1-1 mapping between processors and regions at each iteration, and the sub-regions that we generate through sub-division are assigned to new processors at the next iteration. This eliminates the need for sequential operations such as sorting or heap-maintenance on processor-local sub-regions. Furthermore, the adaptiveness of PAGANI is not attributed to error-based sub-region sorting, which is the most common approach in methods like Cuhre. 
Instead, a filtering step at each iteration first classifies the regions that need to be processed further (``active'') and those whose further sub-division is not expected to impact the cumulative error (finished) and can be removed from memory.

\subsection{Algorithm Description}

The details of the PAGANI algorithm are described in Listing \ref{alg:bfc_alg}. The input variables are: the integrand $f$ and its dimensionality $n$, the bounds of the integration space $b$, as well as the user-specified relative and absolute error tolerances $\tau_{rel}$ and $\tau_{abs}$. The integration space is initially defined as a single region (line 2). Before proceeding with the main loop, the starting region is partitioned to $d^n$ child sub-regions by  dividing each axis into $d$ equal parts. This generates the initial list $H$, which will be iteratively expanded in the main loop through further sub-divisions. Pre-processing is finalized by initializing all required lists and variables (lines 5 to 8). The list $A$ contains integer flags to distinguish the active and finished regions. $V, E, $ and $K$ are lists that hold the integral estimates, error estimates, and next axis to split, while $V_p, E_p$ hold the estimates of the parent regions (regions processed at last iteration). We also keep the cumulative integral and error estimates of the in-memory regions in the variables $v$ and $e$, while the estimates contributed by the finished regions from all past iterations are accumulated in $v_f$ and $e_f$.

The algorithm proceeds to execute the main loop for a maximum of $it_{max}$ iterations or until its relative-error estimate reaches $\tau_{rel}$ or its  error-estimate reaches $\tau_{abs}$. Each iteration begins with the evaluation of all regions in $H$ (line 10), which returns updated lists $V, E,$ and $K$.
Then we refine the error estimates  produced by the \textsc{evaluate} method by replacing them with the two-level error-estimates. Once the finalized integral and error estimates are computed, we can classify the regions (line 12) by comparing their individual estimated relative-errors against $\tau_{rel}$. The next step is to perform a sum reduction of $V$ and $E$ (lines 13, 14), which yields the estimates of the regions being processed. 

Then the termination condition is evaluated (line 15). This involves adding the contributions of the ``leaf" regions, which consist of the current regions in $H$ and the finished regions.
If the termination condition is not met, the algorithm must determine which regions to process further. The REL-ERR-CLASSIFY method, populates the list $A$ with a $1$ or $0$ to indicate which regions will remain active for further processing. A summation on the dot product of $V$ and $A$ (line 18), accumulates the integral estimate of the active regions, and by subtracting from $v$ we can determine the estimate of the finished regions. The same process is applied for the error estimates. A secondary classification is applied (line 17) to identify regions whose relative-errors are larger than $\tau_{rel}$ but have a magnitude too small to impact the accuracy of the global estimates. A filtering operation is then applied and the finished regions are removed from all lists. The parent estimates in $V_p, E_p$ are then updated with the values of the current regions. The iteration concludes with the sub-division of all the remaining regions on the axes listed in $K$, and the setting of the active region-list size $s$, to double the size of the active regions. 

\begin{algorithm}
	\caption{PAGANI Algorithm}
	\begin{algorithmic}[1]
		\Procedure{\sc{PAGANI}}{$f$, $n$, $b[n]$, $\tau_{rel}$, $\tau_{abs}$}
		\State {$R_0 \leftarrow b$}
		\State {$s  \leftarrow d^n$}  \Comment{region list size}
		\State {$H \leftarrow $ \sc{Uniform-Split}($R_0, d$)}
		\State {$A[1:s] \leftarrow 1$}
		\State {$V[1:s], E[1:L], K[1:s] \leftarrow 0$}
		\State {$V_p[1:s], E_p[1:s] \leftarrow 0$}
		
		\State {$v, e, v_f, e_f\leftarrow 0$}   \Comment{cumulative/finished estimates}
		
		\For{\texttt{$it:it_{max}$}}
		\State {\texttt{$V, E, K \leftarrow$ \sc{Evaluate}($H$)}}
		\State {\texttt{$E \leftarrow$  \sc{Two-Level-Error}($V, E, V_p, E_p$)}}
		\State {\texttt{$A  \leftarrow$ \sc{Rel-Err-Classify}($V, E, A$)}}
		
		\State {$v \leftarrow$ \sc{Sum}($V$)}
		\State {$e \leftarrow$ \sc{Sum}($E$)}
		
		\If{$\frac{e+e_f}{\lvert v+v_f \rvert} \leq \tau_{rel}$ or $e+e_f \leq \tau_{abs}$}
		\State {\texttt{return $v+v_f, e+e_f$}}
		\EndIf
		
		\State \texttt{$A \leftarrow$ \sc{Threshold-Classify} ($A, E, v+v_f, e+e_f, v, e, s$)}
		\State {$ v_f \leftarrow$ $v -$ \sc{Sum}($V \cdot A$) $+v_f$}
		\State {$ e_f \leftarrow$ $e -$ \sc{Sum}($E \cdot A$) $+e_f$}
		\State {$H,V,E,L\leftarrow$ \sc{Filter}($H, V, E, A$)}
		\State {$V_p \leftarrow V, E_p \leftarrow E$}  \Comment{update all parents}
		\State {$H \leftarrow$ \sc{Split}($H$, $K$)}
		\State {$s \leftarrow 2s$}
		\EndFor      
		\EndProcedure
	\end{algorithmic}
	\label{alg:bfc_alg}
\end{algorithm}

The initial partitioning  of the integration space and all operations within the iteration loop can be executed in parallel. The {\sc Evaluate} method is the only place where parallelization at the function evaluation level is required, and thus a collection of  is responsible for the evaluation of a single region. The classification, filtering, and sub-division methods mainly modify the lists and perform copying operations, which allows for a single thread to process a region. Similarly, the summations can be executed through reduction operations that can exploit parallelism.

\subsection{Region Evaluation}

The success and convergence rate of sequential methods is mostly dependent on the appropriateness of the cubature rules on the target integrand. PAGANI utilizes the same cubature rules as Cuhre and can function as an equally-reliable alternative. The computation of the estimates requires a weighted summation in the form of $\sum_{i=1}^{N} w_i \cdot f(x_i)$, with the weights ($w_i$) and number of function evaluation points ($N$) and their location distinguishing the various cubature rules. 

This summation is used to derive both the integral and error estimates. A cubature rule provides a single sum for the integral estimate. To compute the error-estimate, four additional rules provide four different estimates, with the largest difference of those four yielding an error value. Similar to Cuhre, PAGANI uses this error-value to compute a more refined two-level error-estimate that is less susceptible to overestimation. While details can be found in \cite{twolevelerrorest}, the computation requires the integral estimate of the parent, as well as the integral and error estimates of the ``sibling'' sub-region. This can improve accuracy as a challenging integrand variation (e.g. very sharp peak) could be potentially exposed in the parent region but not on the children sub-regions.  


\subsection{Processor-to-Region Mapping}

The lack of the sequential Algorithm \ref{alg:adaptiveAlg} within PAGANI, allows the algorithm to map a processor to a single sub-region for the span of a single iteration. Since parallel processors perform a single split per iteration, the cumulative integral and error estimates are known after every parallel set of evaluations. This implicit synchronization at each iteration, allows the algorithm to terminate as soon as sufficiently accurate estimates are computed. This is in contrast to methods relying on the sequential adaptive algorithm, where processors are mapped to a region space and can perform many redundant sub-divisions in isolation. These isolated sub-divisions prevent knowledge of the cumulative estimates without a separate synchronization and accumulation step. The high computational cost of such repeated synchronizations is the primary reason behind existing parallel adaptive algorithms' simple but inefficient local termination conditions based on maximum number of function evaluations and/or local relative-error.

\subsection{Integration Space Sub-division}

PAGANI selects the regions to process by essentially performing a breadth-first search in the sub-region tree; the algorithm attempts to find the sub-region ``leaves'' whose error-estimate summation results in a sufficiently small relative-error. 
PAGANI is adaptive and avoids a brute force progression by discarding individual regions with sufficiently small relative-error at each iteration, essentially pruning the sub-region tree. 

Figure \ref{fig:EightIterations} illustrates how PAGANI differs from Cuhre and the two-phase algorithm of \cite{korig} in how they generate the sub-region tree. Sequential Cuhre performs one sub-division per iteration and can thus generate a narrow tree, with some tree-depths containing many regions that are never sub-divided. In contrast, the PAGANI algorithm sub-divides all regions at each depth, unless they have been classified as finished. While the two-phase algorithm follows this scheme as well, once enough sub-regions have been generated for a 1-1 mapping with all available processors, the sequential algorithm is executed by each processor on their respective sub-region. Instead, PAGANI continues the breadth-first expansion while classifying even more regions as finished by a more aggressive filtering approach. 

The sub-region trees will look different among the three methods, with the tree in PAGANI expected to be wider and potentially smaller in depth. Nonetheless, since all three methods evaluate the individual regions in the same manner, similar patterns are expected to be observed in terms of which sections of the tree will be expanded. In the example illustrated in Figure \ref{fig:EightIterations}, sequential Cuhre never expands the right node at the second level due to its small error-estimate. PAGANI and the two-phase method both sub-divide all regions at that level. At a later depth on the tree, the sub-regions in that branch are classified as finished, mirroring the region-selection of Cuhre.

\begin{figure}[t]
	\centering
	\includegraphics[width=\linewidth, scale = .8]{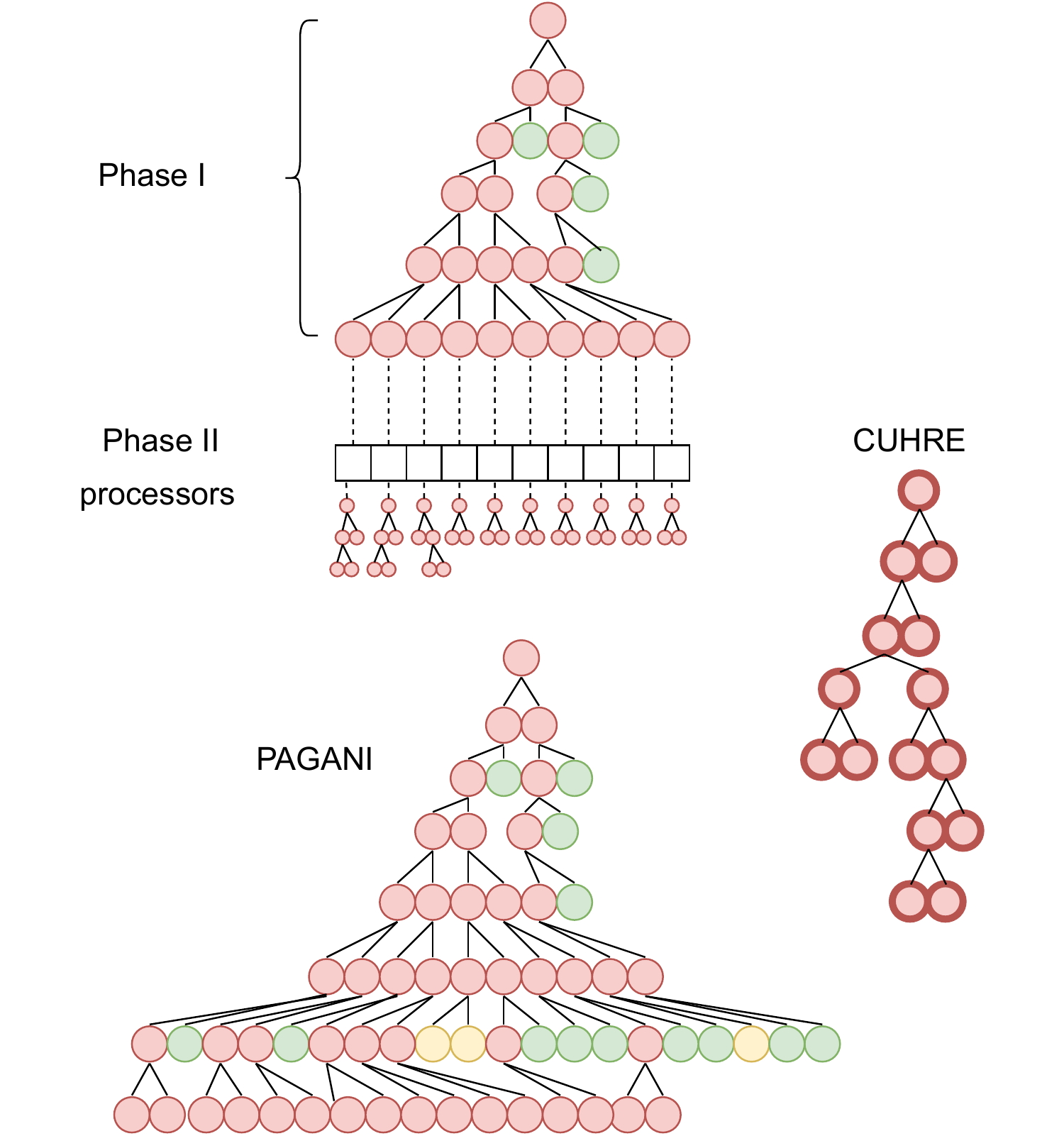}
	\caption{The state of the sub-region tree once the three adaptive quadrature algorithms complete seven iterations on an arbitrary integrand. For the two-phase algorithm and PAGANI, green nodes represent finished regions which have a relative-error smaller than $\tau_{rel}$. Note that for the two-phase algorithm, five iterations are completed in phase I and two more iterations in phase II, with some phase II processors terminating at the sixth iteration. The yellow nodes on PAGANI represent finished regions that have a relative-error larger than $\tau_{rel}$ and were classified as finished by an additional classification operation.}
	\label{fig:EightIterations}
	\centering
\end{figure}

\subsection{Adaptive Measures}\label{adaptiveMeasuresSect}

The goal of PAGANI is the same as the other adaptive algorithms, that is to identify low-contributing regions in order to better utilize computational resources and accelerate the convergence rate to sufficiently accurate estimates. Most adaptive algorithms attempt to identify these regions based on   local computations. PAGANI utilizes more sophisticated global criteria for identifying which regions are ``accurate'' or ``small'' enough to not sub-divide further without incurring any loss of accuracy. We classify such regions as finished and discard them from memory after accumulating their contributions.

\subsubsection{Relative-error filtering}

When the integrand is either always non-negative or non-positive in the entire integration domain, identifying finished regions is particularly effective due to a useful property of the estimate summations, which guarantees that if all sub-regions have a relative-error smaller than the $\tau_{rel}$, then the cumulative error satisfies the required accuracy. While not all sub-regions are expected to meet this criteria, we can stop sub-dividing the sub-regions that do, without negatively impacting the convergence rate.

\begin{lemma}
	Let $m$ denote the number of subregions. Assume that \,$\forall{i}\, 1 \leq i \leq m,\, e_i \geq 0$ and  $ v_i$'s have the same sign. Let $e = \sum_{i=1}^{m} e_i$ denote the cumulative error and $v = \sum_{i=1}^{m} v_i$ denote the cumulative integral estimate. Suppose $\forall{i} \,\, 1 \leq i \leq m,\, e_{i} \leq \lvert {v_i} \rvert \cdot \tau_{rel}.$
	Then $e \leq \lvert v \rvert \cdot \tau_{rel}$.
    
\end{lemma}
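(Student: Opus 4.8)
The plan is to prove the inequality by summing the per-region hypotheses and then exploiting the same-sign assumption to convert a sum of absolute values into the absolute value of a sum. The whole argument is essentially a one-line chain of inequalities, so I would keep it short and make the role of each hypothesis explicit.

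First I would sum the given per-region bounds $e_i \leq |v_i|\cdot\tau_{rel}$ over all $i$ from $1$ to $m$. Since each $e_i \geq 0$, the left-hand side sums to the cumulative error $e$, and pulling out the common factor $\tau_{rel}$ on the right gives
\[
e \;=\; \sum_{i=1}^{m} e_i \;\leq\; \tau_{rel}\sum_{i=1}^{m} |v_i|.
\]
This step is purely formal and uses only the hypothesis $e_i \leq |v_i|\cdot\tau_{rel}$ together with the non-negativity of the $e_i$.

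The crucial step is to replace $\sum_{i=1}^{m}|v_i|$ by $|v|$. Here I would invoke the assumption that all the $v_i$ share a common sign: in that case no cancellation occurs in the sum $v=\sum_i v_i$, so $\left|\sum_{i=1}^{m} v_i\right| = \sum_{i=1}^{m} |v_i|$, i.e. $\sum_{i=1}^{m}|v_i| = |v|$. (Concretely, if all $v_i \geq 0$ then both sides equal $\sum_i v_i$, and if all $v_i \leq 0$ then both sides equal $-\sum_i v_i$; either way the triangle inequality holds with equality.) Substituting this into the chain above yields $e \leq \tau_{rel}\,|v|$, which is the claim.

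The main obstacle is not computational but conceptual: recognizing that the same-sign hypothesis is exactly what upgrades the triangle inequality $\left|\sum_i v_i\right| \leq \sum_i |v_i|$ to an equality. Without it the argument breaks, since one could have $\sum_i |v_i| \gg |v|$ through cancellation, and the global relative-error bound would fail even though every region satisfies it locally. I would therefore emphasize in the write-up that this is precisely the reason relative-error filtering is safe for sign-definite integrands but requires the additional \textsc{Threshold-Classify} safeguard (line 17) when the integrand changes sign, as the paper notes in Section 3.5.
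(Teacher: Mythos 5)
Your proof is correct and follows essentially the same route as the paper's own one-line chain: sum the per-region bounds $e_i \leq \lvert v_i\rvert\cdot\tau_{rel}$ and then use the same-sign hypothesis to pass from $\sum_{i}\lvert v_i\rvert$ to $\lvert v\rvert$. If anything, your write-up is slightly more careful, since you note explicitly that the same-sign assumption makes $\sum_{i}\lvert v_i\rvert = \bigl\lvert \sum_{i} v_i \bigr\rvert$ an equality, whereas the paper records that step only as the inequality $\sum_{i}\lvert v_i\rvert\cdot\tau_{rel} \leq \bigl\lvert\sum_{i} v_i\bigr\rvert\cdot\tau_{rel}$, which would be false without the sign condition.
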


\begin{proof}
    $e = \sum_{i=1}^{m} e_i \leq \sum_{i=1}^{m} \lvert v_i \rvert \cdot \tau_{rel} \leq  \lvert \sum_{i=1}^{m}  v_i \rvert \cdot \tau_{rel}=\lvert v \rvert \cdot \tau_{rel}.$
\end{proof}

This property holds only when all integral estimates $v_i$'s have the same sign. This is usually the case for integrands that are non-negative (or non-positive) over the entire integration domain.  
In principle, for functions that can take both positive and negative values, marking sub-regions as "finished" when $e/\lvert v \rvert \leq \tau_{rel}$ could be problematic as this could potentially lead to early termination without good accuracy. In practice, we observed that the algorithm may produce inaccurate estimates only when the integrand repeatedly oscillates to both negative and positive values. For such integrals one should not discard regions based on the relative-error. If the user is aware of such behavior on the integrand, a user-configurable integer flag can be set to prevent relative-error filtering.

\subsubsection{Heuristic Threshold Classification}

Filtering based on the relative-error provides a degree of adaptiveness that is useful but not sufficient for demanding error-tolerances. When $\tau_{rel}$ is small, it is more difficult for regions to reach sufficiently small error-estimates to satisfy the relative-errors. This could lead to limited or even no relative-error filtering, possibly resulting in memory exhaustion prior to reaching a desired accuracy level. To avoid this situation, PAGANI performs an extra classification step, which identifies additional regions as finished, as long as their error-estimates are small enough to not impact the cumulative error estimate. This is performed by the \textsc{Threshold-Classify} method, which compares the error-estimates of the sub-regions against a threshold-value. Sub-regions with error-estimates smaller than the threshold, are classified as finished.

Unlike Cuhre which keeps all regions in memory, any regions that PAGANI ``filters out'' are permanently removed, with their contribution reflected only in the total finished estimates. As a result, finished error-estimates represent contributions from regions that the algorithm commits to not decreasing further. Thus, the active regions are the only ones whose error-estimates can still be decreased through sub-division, which is why the algorithm must not over-commit finished contributions by overly-aggressive filtering. 

Keeping this in mind, the \textsc{Threshold-Classify} method is only invoked in two cases for two different reasons: a) the cumulative integral estimate's log$(\frac{1}{\tau_{rel}})$ significant digits remain unchanged but the cumulative error is still too high, b) memory would be exhausted if we sub-divide all active regions. In the first case, the algorithm has essentially solved the integral, but must decrease the error further to satisfy the requested precision. Since we do not risk loss of accuracy on the integral estimate, we can safely attempt to discard regions with small error estimates. In the second case, conserving memory is the only possibility for the algorithm to continue. It is worth noting that if memory is exhausted and no additional iterations can be executed, the algorithm will return the latest integral and error-estimate with a flag pertaining to not achieving the user's accuracy requirements.

\subsubsection{Threshold Search}

Identifying a good value for threshold \textit{a priori} is not possible for arbitrary integrands. However, given an arbitrary threshold value, we can compute the number of regions that would be discarded and their contributions without actually removing them, allowing for dynamic choice of threshold. The method \textsc{Threshold-Classify} attempts various thresholds in the range defined by the minimum and maximum error estimates of the regions being processed at that specific iteration. Once, the classification yields acceptable results, \textsc{Threshold-Classify} terminates, the finished regions are filtered out, and the next iteration proceeds.

The classification step is successful if at least $50\%$ of regions are finished (conserving enough memory) and if the cumulative error of those regions does not exceed the remaining error budget; we will refer to these conditions as memory and accuracy requirements respectively. We define the error budget, as the amount by which the current cumulative error must be decreased in order to reach a relative-error of $\tau_{rel}$, which translates to successful integration. Since a pre-condition for execution of the \textsc{Threshold-Classify} method is that the integral estimate has converged to the required significant digits, we can compute a sufficient approximation as $e_b \approx e - v \cdot \tau_{rel}$. It is worth noting that if the finished error-estimate is larger than the error budget, then convergence is impossible. However, this scenario is easily detectable and hence can be avoided by choice of threshold value.

\textsc{Threshold-Classify} (Listing \ref{alg:classify_alg}) ``searches'' for a threshold by following a scheme similar to a binary search. The input variables consist of the relative-error classification list $A$ as produced by \textsc{rel-err-classify}, the error-estimate list $E$, the total integral and error estimates $e_{tot}, v_{tot}$ which include finished contributions, the number of regions being processed in the current iteration $s_{it}$ and their total error-estimate $e_{it}$.  The first threshold to try is always the average error-estimate of the currently active regions. The \textsc{threshold-apply} simply performs a comparison between the threshold and each error-estimate in $E$ and updates $A$ with the classification results. If any of the two threshold requirements (memory, accuracy) mentioned above is not satisfied, the threshold must be updated through \textsc{update-threshold} in order to be evaluated again. If fewer than 50\% of active regions are discarded, the threshold decreases, allowing more regions to surpass it. If the accuracy requirement is not satisfied the threshold increases in order to discard fewer regions. Depending on which requirement is not met, the threshold is updated at half the distance between its current value and the minimum or maximum error-estimate. This process is illustrated in Figure \ref{fig:ThresholdSearch} for a five-dimensional Gaussian integrand. Using the average threshold would remove 80\% of the active regions which more than satisfies the memory requirements, but these regions contribute to 488\% of the error-budget, which would make convergence impossible. The algorithm, repeatedly decreases the threshold which results in more favorable conditions.

\begin{algorithm}
	\caption{Threshold Classification Algorithm}
	\begin{algorithmic}[1]
		\Procedure{\sc{Threshold-Classify}}{$A, E, v_{tot}, e_{tot}, e_{it}, s_{it}$}
		\State {$P_{max} \leftarrow .25$}  \Comment{target percentage of error budget}
		\State {$e_{\bar{a}}  \leftarrow 0$} \Comment{error-estimate contribution from inactive regions}
		\State {$ s_{\bar{a}} \leftarrow 0$} \Comment{\# inactive regions}
		\State {$min, max \leftarrow MinMax(E)$}
		\State {$e_b \leftarrow e_{tot} - v_{tot} \cdot \tau_{rel}$} \Comment{error budget}
		\State {$t \leftarrow \frac{e_{it}}{s_{it}}$} \Comment{set initial threshold as avg. error-estimate}

		\Repeat
		
		\State {$A \leftarrow$ \sc{Apply-Threshold}($E, t$)}
		\State {$s_{\bar{a}} \leftarrow n -$ \textsc{sum}($A$)}
		
		\If{$s_{\bar{a}} > .5 \cdot s_{it}$} \Comment{memory requirement}
		\State {$e_{\bar{a}} \leftarrow e_{it} - $ \sc{sum}($A \cdot E$)}
		
		\If{$e_{\bar{a}} \leq P_{max} \cdot e_b$} \Comment{accuracy requirement}
		\State {\Return $A$}
		\EndIf
		\EndIf
		\State {\sc{Update-Threshold}($t, s_{\bar{a}}, e_{\bar{a}}, P_{max}, min, max$)}
		\Until{$\frac{s_{\bar{a}}}{s} > .5$ and $e_{\bar{a}} \leq P_{max} \cdot e_b$}
		
		\State {\Return $A$} \Comment{unsuccessful filtering}
		\EndProcedure
	\end{algorithmic}
	\label{alg:classify_alg}
\end{algorithm}

In order to avoid needless search in the absence of an acceptable threshold, we limit the number of times allowed to change the direction (towards min. or max. error-estimate) of the search. Furthermore, to prevent dramatic reductions to the error budget, the memory requirement is initially satisfied only if the contribution of the finished regions does not exceed 25\% of the remaining error budget. This demanding ratio is relaxed upon observing repeated direction changes. Specifically, the percentage is incremented by $10$ on every direction change, with 95\% as a maximum. These simple adjustments are performed within \textsc{update-threshold} through trivial counter variables that we do not list for brevity.

\begin{figure}[t]
	\includegraphics[width=\linewidth]{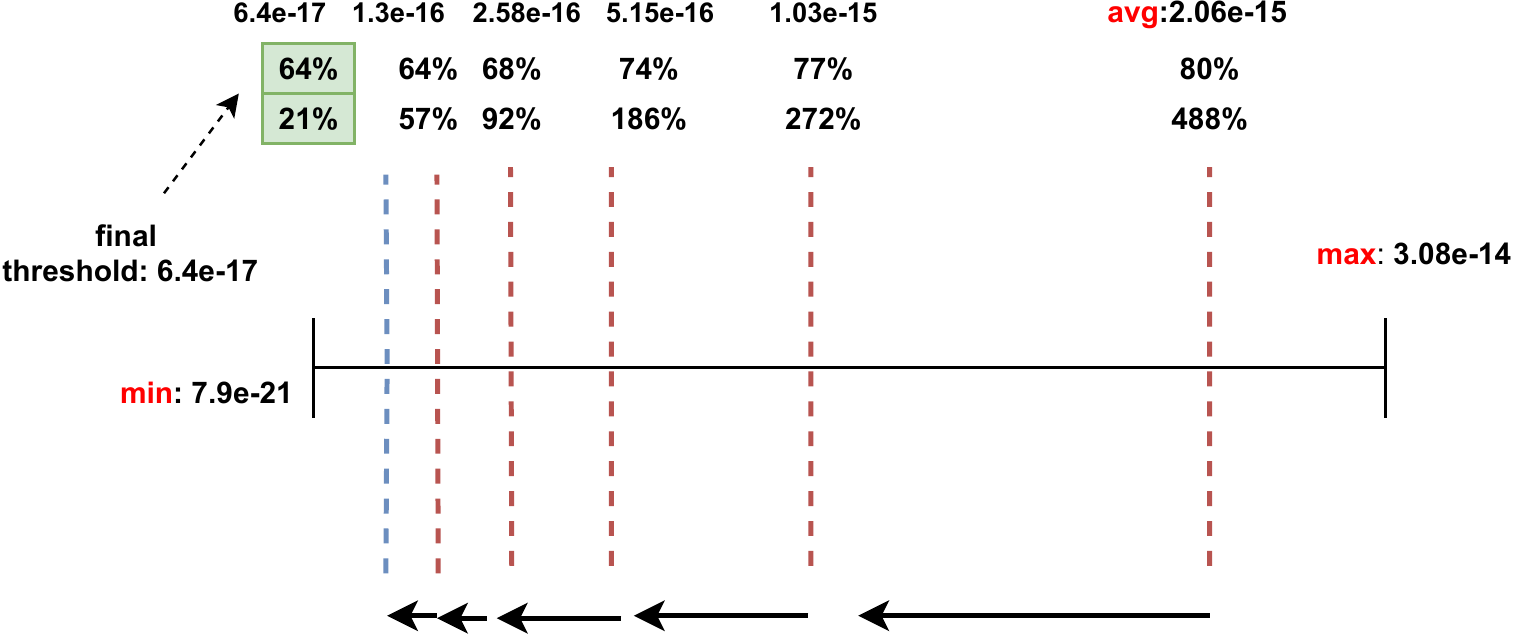}
	\caption{Threshold Search. The dotted lines represent the different values of thresholds that are attempted, with the blue line indicating the threshold that satisfied both memory and accuracy requirements. The top percentage value above the lines, indicates the percentage of current regions that would removed from memory if the candidate threshold was utilized. The bottom percentage value indicates the amount of the error budget that will be covered by the discarded regions.}
	\label{fig:ThresholdSearch}
	\centering
\end{figure}

\section{Experimental Results}

Our experiments with both parallel and sequential implementations were conducted on a compute node with a 2.4 GHz Xeon (R) Gold 6130 CPU and a 16 GB V100 GPU. The V100 provides a 16GB HBM2 on-board memory and a peak performance of 7.834 Tflops in double precision floating point arithmetic. The programming environment consists of CUDA 11, GCC 9.3.1 and utilizes c++ 17. We compare PAGANI against sequential Cuhre and the parallel two-phase method. Execution of the two-phase method was achieved with an updated implementation for a fair comparison. Execution of sequential Cuhre was conducted with the Cuba 4.0 C-implementation. Cuba allows parallelization of the function evaluations through the fork/wait POSIX functions. This method of parallelization is relatively inefficient and does not yield any speedup unless the function evaluations of the integrand  takes significant computation time. For the integrands in our test suite, no performance benefits were observed when utilizing such parallelization. To our knowledge there is no other open-source parallel version of Cuhre and as such we utilized the sequential version of Cuba. 

All experiments on Cuhre, PAGANI, and the two-phase method, were executed with the same input parameters. The absolute-error tolerance was constant and set to $10^{-20}$. Each integrand was evaluated on increasingly smaller relative-error tolerances across the range  $\tau_\textrm{rel} = (10^{-3}, 1.024 \cdot 10^{-10})$ unless failing to integrate successfully before reaching the minimum $\tau_\textrm{rel} = 1.024 \cdot 10^{-10}$. The critical parameter on all experiments is $\tau_\textrm{rel}$ as it specifies the user-required accuracy. For all plots in this section, the \textit{x-axis} represents the various $\tau_\textrm{rel}$ on which the different integrands were executed. To make the plots more intuitive we perform a simple transformation on the \textit{x-axis} with  $\log_{10}(\frac{1}{\tau_{rel}})$, which ``roughly'' translates to the number of precision-digits, e.g $\tau_\textrm{rel}$ of $10^{-3}$ translates to 3 digits of precision. The ``final'' flag in Cuhre was set to $1$, since the relative-error is otherwise prone to overestimation for sequential Cuhre. Finally, the maximum number of function evaluations on Cuhre was set to $10^{9}$ but PAGANI and the two-phase method had no such constraint as their success relies on increased computation-throughput. Timing results encompassed the integration as a single function call, which did not include work-space allocation/setup as this is only required once for the computation of various integrals. 

Pagani and the two-phase method dedicate a block with $256$ threads for each sub-region evaluation; for the two-phase method this applies to both phases. Phase I executes until reaching a maximum number of regions that can satisfy a 1-1 mapping with the parallel blocks without exhausting GPU resources; for the V100 this was set to $2^{15}$. Phase II can launch a maximum of $2\cdot 2^{15}$ blocks, with a memory space $2048$ sub-regions per block. All other kernels on both GPU methods are executed with $256$ threads per block as well. 

\subsection{Test Integrands}

We performed experiments on several functions (with known analytic integration results) derived from the families of integrands presented in \cite{cuba}. We also include two box integrals $f_7(x)$ and $f_8(x)$ in our test suite.
Each tested integrand has different characteristics such as dimensionality, shape of integration-space, presence of peaks, discontinuities, and non-differentiability, as well as box and Gaussian forms. The domain of integration for all functions is $(0,1))$ on each dimension. The plots in this section include executions of $f_1(x), f_3(x), f_4(x), f_5(x), f_7(x), f_8(x))$, in eight dimensions, $f_4(x)$ in five dimensions, $f_6(x)$ in six dimensions, and $f_3(x)$ in three dimensions.

    \begin{equation}
     f_1\left(x\right) = \cos\left(\sum_{i=1}^{8} i \, x_i\right)
    \end{equation}

	\begin{equation}
	 f_2\left(x\right) = \prod_{i=1}^{6} \left(\frac{1}{50^2} + \left(x_i - 1/2\right)^2\right)^{-1}
	\end{equation}

	\begin{equation}
	f_3\left(x\right) = \left(1+ \sum_{i=1}^{d} i \, x_i\right)^{-d-1} 
	\end{equation}

	\begin{equation}
	 f_4\left(x\right) = \exp\left(-625 \sum_{i=1}^{d} \left(x_i-1/2\right)^2\right)
	\end{equation}

	\begin{equation}
	 f_5\left(x\right) =	\exp\left(-10  \sum_{i=1}^{d} | x_i - 1/2 |\right)
	\end{equation}
	
	\begin{equation}
	 f_6\left(x\right) =	
	\begin{cases}
    \exp\left(\sum_{i=1}^6 \left(i+4\right) x_i\right) & \text{ if } x_i < \left(3+i\right)/10 \\
    0                           & \text{otherwise}
	\end{cases}
	\end{equation}
   
   \begin{equation}
   	f_7\left(x\right) =  \left(\sum_{i=1}^{d} x_i^2\right)^{11}
   \end{equation}
   
    \begin{equation}
   	f_8\left(x\right) =  \left(\sum_{i=1}^{d} x_i^2\right)^{15/2}
   \end{equation}
\subsection{Accuracy}
    
Cubature based algorithms like Cuhre and PAGANI report convergence to sufficiently accurate results when the ``estimated'' relative-error is smaller than $\tau_\textrm{rel}$ or the error-estimate is smaller than $\tau_\textrm{abs}$. The estimates of the algorithms are not expected to be identical to the true relative/absolute error and to our knowledge, no algorithm can guarantee such claims. Past  work has demonstrated that the two-level error-estimation based on the differences of various cubature rules, computes sufficiently accurate results in the vast majority of cases; in our experiments only the case of the $5D$ $f_5(x)$ integrand, caused Cuhre to overestimate the error and this behavior was only observed after 7-digits of precision. Nonetheless, we believe that any analysis of numerical integration algorithms, whether sequential or parallel, deterministic or probabilistic, should evaluate the accuracy of the yielded estimates, especially when investigating performance. 

The standard testing approach proposed in \cite{GenzTest} involves comparing the execution times of integrands which consist of randomized parameters that specify the location/difficulty of integrand features such as peaks and discontinuities. The different characteristics of the test integrand families are critical for evaluating the robustness and adaptability of an algorithm under different circumstances. As such, this testing approach is well suited for investigating execution times but does not facilitate inspection of the achieved accuracy. The randomness of the parameters makes it challenging to determine an analytic evaluation of each integrand, making any comparison between the true values and the estimates difficult to achieve. To solve this problem and avoid algorithms that execute fast but with sub-par accuracy, we propose the utilization of the same integrand families presented in \cite{GenzTest}, but with fixed parameters that allow analytic results to be computed and standardized for further testing. This will allow comparison between the estimated relative-error and true relative-error, both of which typically need be smaller or equal to $\tau_{rel}$ to satisfy the user's desired accuracy. 
    
Both PAGANI and the two-phase algorithm match Cuhre in digits-of-precision, with a few exceptions such as 6D $f_6(x)$ and 5D $f_4(x)$ where the two-phase method reached two fewer digits-of-precision compared to Cuhre; this deficiency is attributed to the weaker load-balancing on high-precision (small $\tau_{rel}$) runs. On integrands that were challenging for the sequential method, PAGANI consistently outperformed both Cuhre and the two-phase method by several digits-of-precision. Figure \ref{fig:accuracy} demonstrates this behavior. The $f_7(x)$ integrand is less challenging despite the high number of dimensions and both parallel methods achieve similar levels of precision. On the contrary, $f_4(x)$ and $f_6(x)$ which consist of low-contributions around a Gaussian and discontinuities respectively, expose the fundamental computation problems of parallel adaptive integration; the two-phase method's poor load-balancing leads to the early exhaustion of the allocated memory resources, and failure to compute higher precision results. For example, the two-phase method does not achieve 6-digits of precision for 5D $f_4(x)$, as the algorithm does not assign enough processors to the integration-space where the integrand is ``bell-curved''. The PAGANI relative-error filtering (see section \ref{adaptiveMeasuresSect}) shares this deficiency and repeatedly sub-divides those low-contributing sub-regions as well, nearly exhausting memory after 20 iterations. At that point, the heuristic-search filtering is triggered and easily identifies those sub-regions as finished, retaining only $.004$\% of the sub-regions and allowing quick convergence after 7 additional iterations. The high throughput sub-division approach in PAGANI coupled with the filtering measures, consistently yield greater reliability in terms of precision when compared against Cuhre and the two-phase method without succumbing to long execution times. 

\begin{figure}[t]
	\centering
	\includegraphics[width=\linewidth]{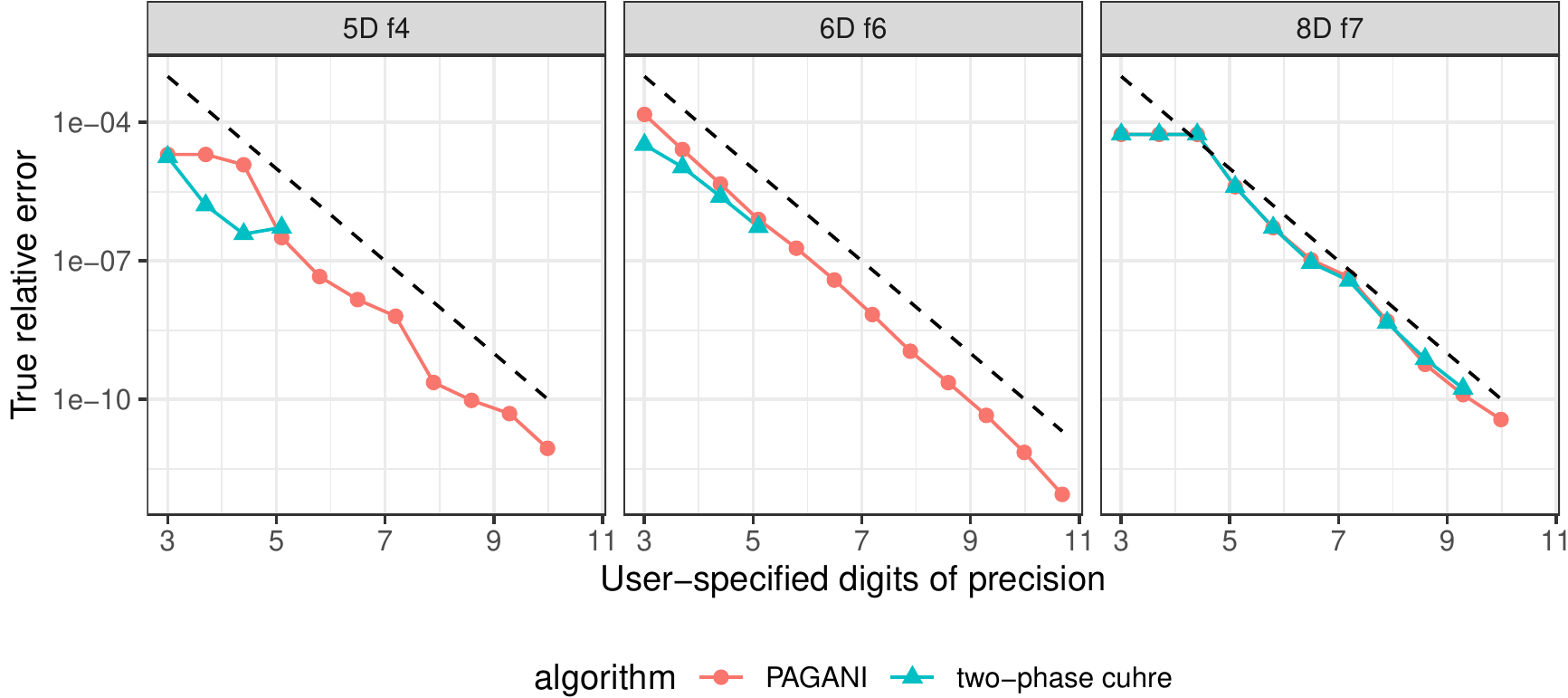}
	\caption{The dotted black line represents the user-specified relative-error tolerance across the experiments. The colored dots represent the true relative-error and their placement below the line indicates that the estimates of the algorithm were close enough to the true values to satisfy the user's desired error tolerance. Placement above the line indicates the overestimation of the attained accuracy at the moment of termination. The two-phase method fails to integrate 5D $f_4(x)$ and 6D $f_6(x)$ for more than $5$ digits-of-precision as poor load-balancing leads to the exhaustion of allocated GPU memory.}
	\label{fig:accuracy}
\end{figure}

\subsection{Performance}
 
The parallel quadrature methods consistently execute orders of magnitude faster than sequential Cuhre if the evaluation of the integrand is not trivial and requires at least a few thousand sub-divisions. When the integrand requires a small number of sub-divisions, the overhead of transferring the sub-region data and allocating resources on the GPU outweighs the benefits of parallelization. Low dimensional integrands such as $f_3(x)$ and even 5D $f_4(x)$ on low precision runs displayed such behavior. The typical speedup for challenging integrands is illustrated in Figure \ref{fig:speedupCombined}. The low precision runs ($3$-$6$ digits of precision) are as little as $15$ times faster on PAGANI but increasing precision yields speedup in the order of thousands. 

When comparing the two-phase method and Pagani, the speedup is less dramatic. They perform similarly when the two-phase method spends little or zero computation time in phase II, since phase I expands the sub-region tree in an identical manner to PAGANI. Figure \ref{fig:speedupCombined} demonstrates this behavior; the execution times are very similar on 8D $f_7(x)$ and 5D $f_5(x)$ for the first 6 digits of precision, where phase II is invoked for very limited sub-divisions or not at all. The 6D $f_6(x)$ integrand makes heavy use of phase II even for three digits of precision and yields some speedup. While not as dramatic as the speedup over Cuhre, PAGANI demonstrates some speedup on moderate digits of precision. There were cases where the two-phase method was faster, such as for 8D $f_4(x)$ and 8D $f_3(x)$, but these cases were consistently coupled with early memory exhaustion for the two-phaes method and failure to integrate for precision levels that were viable for PAGANI. The last points plotted for each integrand on Figure \ref{fig:speedupCombined} shows that the two-phase method cannot converge while PAGANI does. This is further demonstrated in Figure \ref{fig:FinalTimeComparisons} that compares all three methods. PAGANI consistently reaches higher precisions in contrast to the sequential and two-phase method. 

The quasi-Monte Carlo (QMC) method of \cite{qmcGPU}, as executed on GPU displayed stable behavior with the error estimates satisfying the user-specified $\tau_{rel}$. This applied to all functions in our test suite, except 5D $f_4(x)$ which was not correctly evaluated for three digits-of-precision. The QMC method surpassed PAGANI in the attainable precision on 8D $f_1(x)$, reaching nine digits-of-precision, while PAGANI failed at five digits. This is a result of the integrand oscillating with both negative and positive contributions, which makes relative-error filtering unreliable and is turned off via a user inputted flag. PAGANI consistently outperformed the QMC method on the remaining integrals as illustrated in Figure \ref{fig:speedupOverQMC}. 

\begin{figure}[t]
	\centering
	\includegraphics[width=\linewidth]{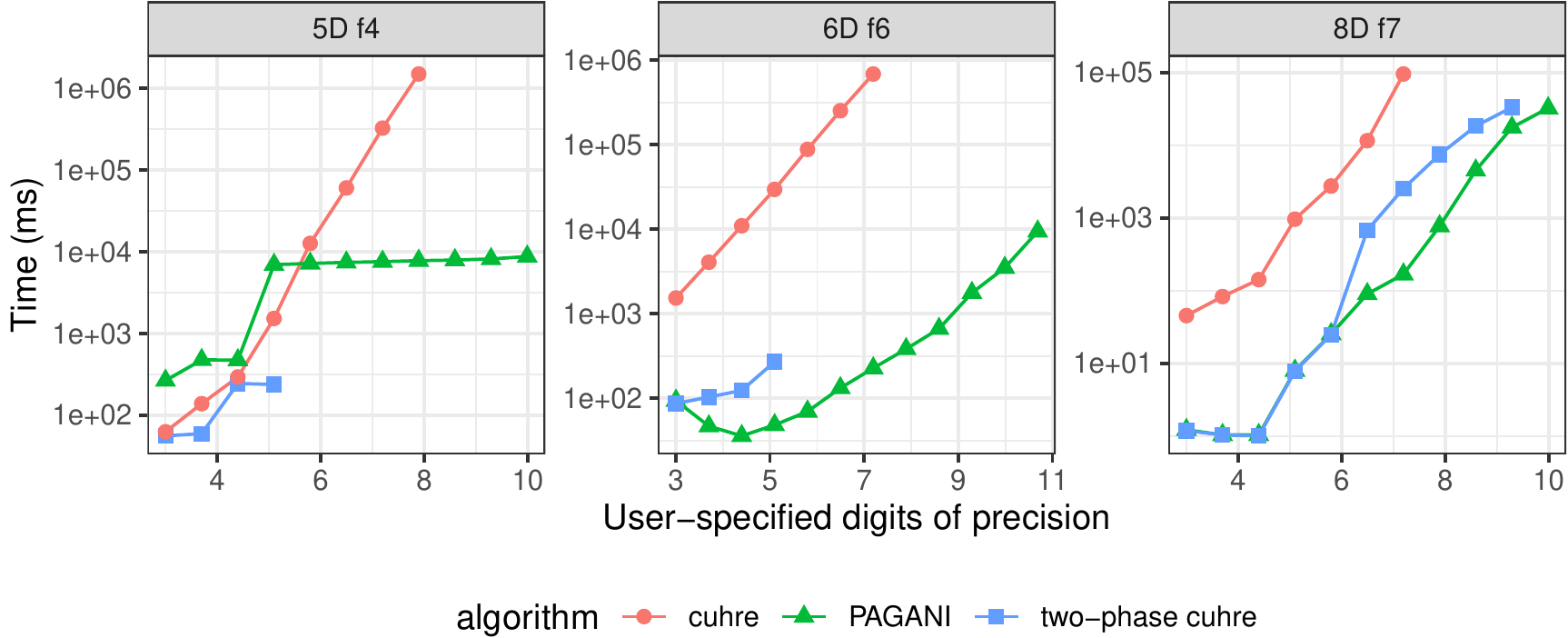}
	\caption{Execution time comparison.}
	\label{fig:FinalTimeComparisons}
\end{figure}

\begin{figure}[t]
	\centering
	\includegraphics[width=\linewidth]{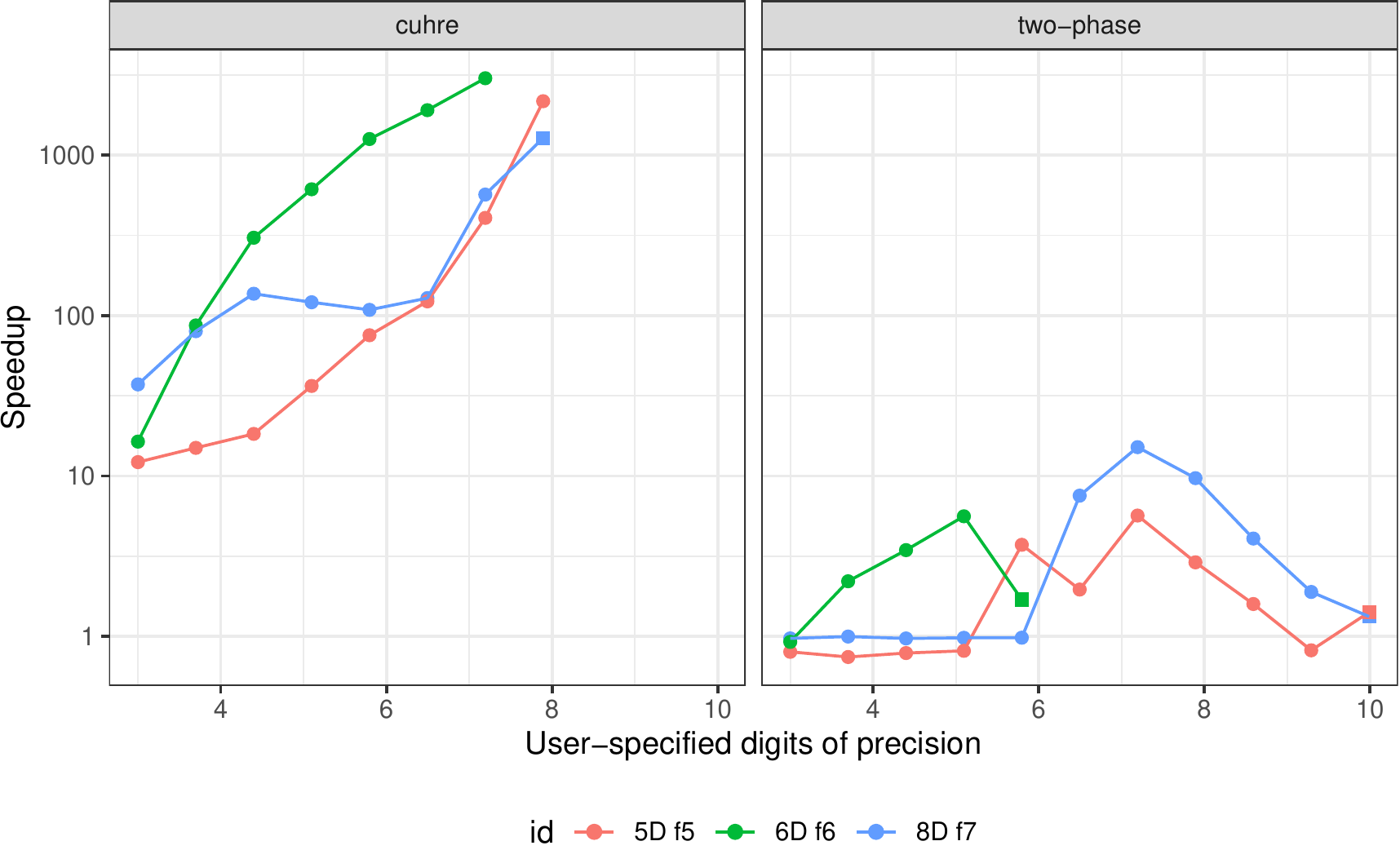}
	\caption{Speedup over Cuhre (left), two-phase Cuhre (right). Squares indicate that only PAGANI satisfied the desired accuracy.}
	\label{fig:speedupCombined}
\end{figure}

\begin{figure}[t]
	\centering
	\includegraphics[ scale = 0.4]{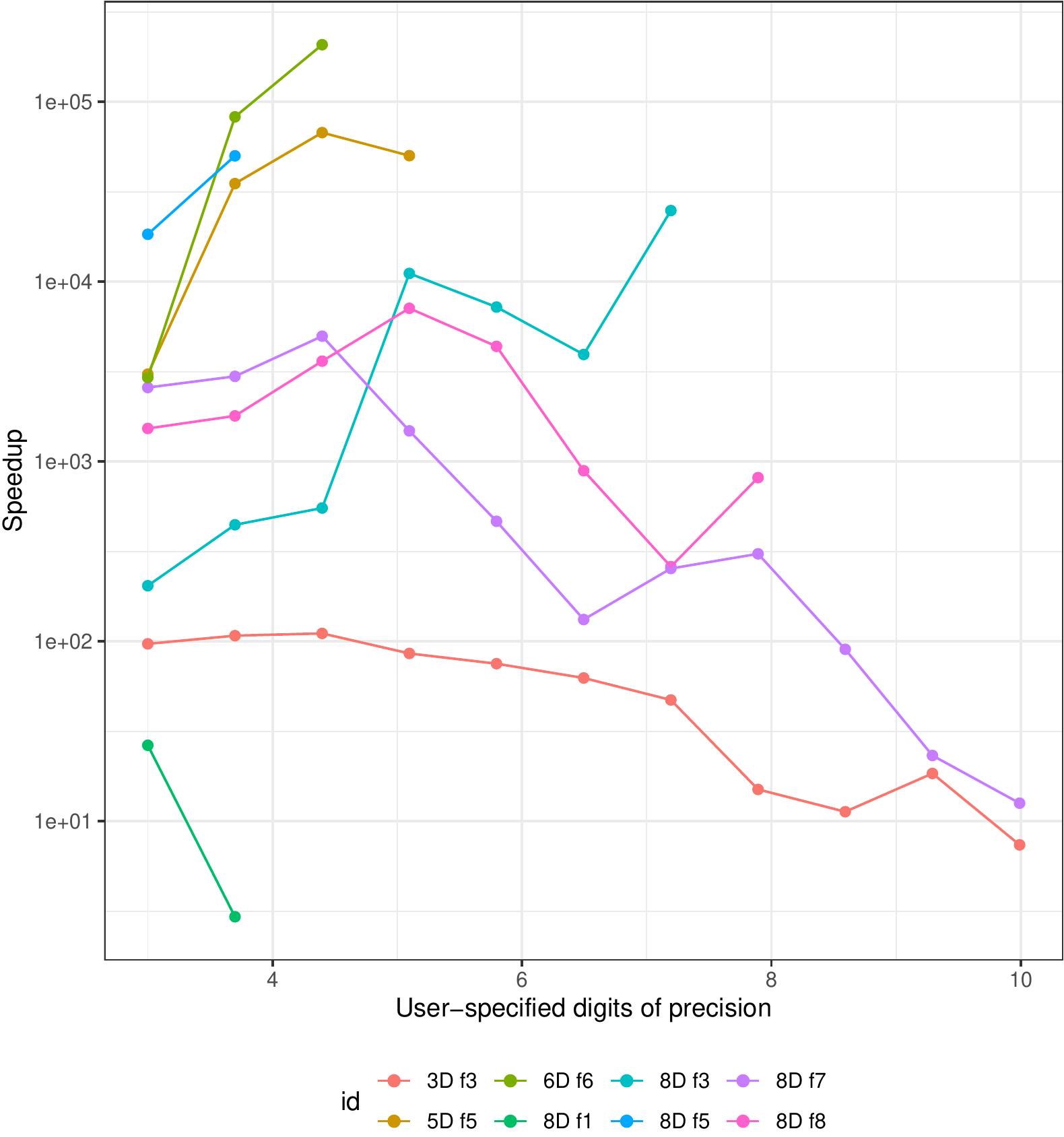}
	\caption{Speedup of PAGANI over quasi-Monte Carlo GPU implementation. Pagani is orders of magnitude faster, but is outperformed on 8D $f_1(x)$ in terms of attainable precision.}
	\label{fig:speedupOverQMC}
\end{figure}

\subsubsection{Filtering Approach}
 
The convergence rate of the PAGANI algorithm greatly benefits from the larger region throughput. The per iteration synchronization means that the extra regions do not negatively affect the sub-division process, as long as memory is not exhausted. The memory requirements increase dramatically without sufficient region filtering. Thus, the additional search-based filtering is critical in order to avoid the load-balancing issues observed in phase II of the two-phase algorithm. Filtering when memory is about to be exhausted can improve execution time (when compared to offloading to CPU memory). Further filtering when the significant digits-of-precision in integral estimate remain unchanged, leads to more significant speedup (Figure \ref{fig:FilteringsComp}), as computation focuses on the regions with greater contribution much earlier without any loss on precision.

\begin{figure}[t]
	\centering
	\includegraphics[width=\linewidth]{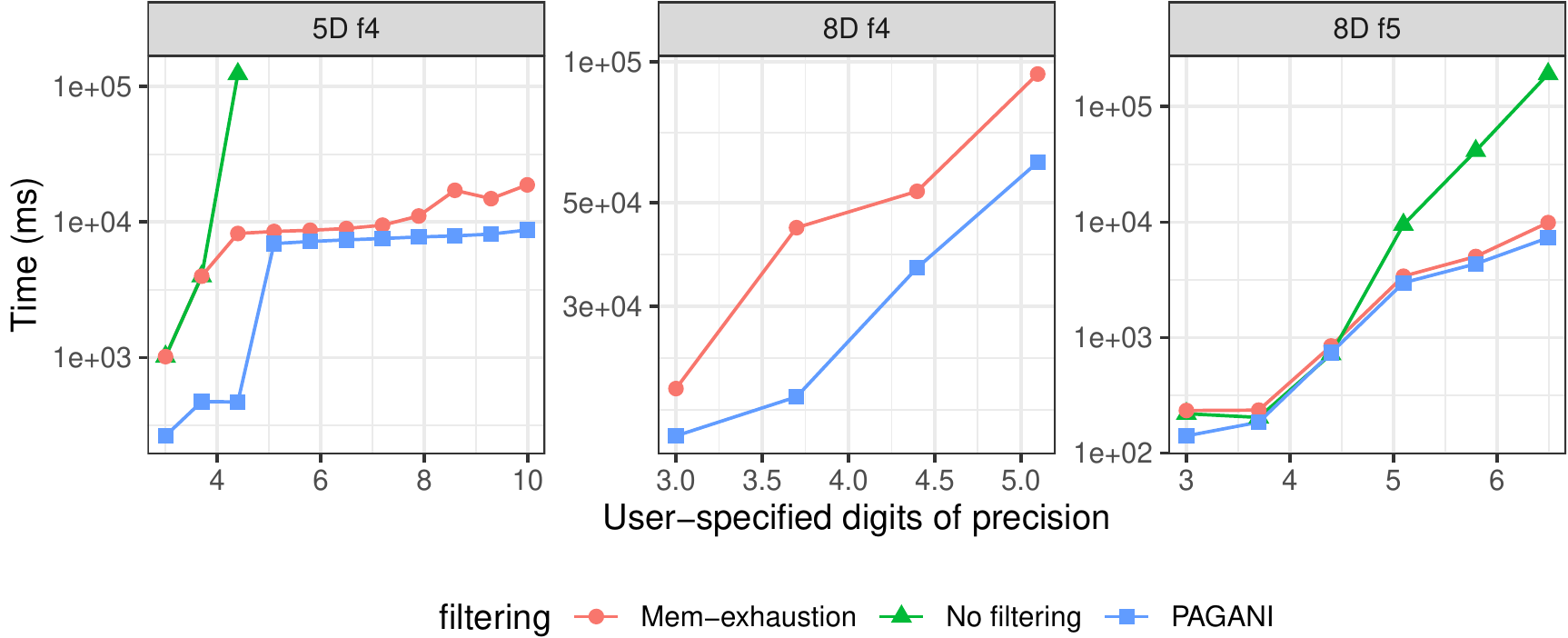}
	\caption{Execution times in milliseconds of PAGANI with/without filtering. On 8D $f_4(x)$ PAGANI without any heuristic filtering, cannot converge even at 3-digits of precision.}
	\label{fig:FilteringsComp}
\end{figure}

\subsubsection{Performance Breakdown}
Each PAGANI iteration consists of four main categories of computation: a) evaluating the sub-regions, b) post-processing that accumulates the results and performs relative-error classification c) threshold classification, and d) filtering and sub-dividing active regions.

More than 90\% of the execution time is consistently dominated by the \textsc{evaluate} method. It processes all sub-regions in parallel, computes the estimates, and determines the axis for the next sub-division. The \textsc{evaluate} method consists of a single GPU kernel and is compute bound with a limited number of memory accesses that are serviced by either shared memory or the read-only cache. The \textsc{evaluate} kernel requires a workload of at least $2^{11}$ sub-regions to reach $40\%$ (maximum at $45\%$) peak-performance on double-precision operations on a V100 GPU. Initial iterations with fewer regions lack sufficient workload for the GPU, leading to fast execution times but inefficient resource utilization. As the number of active regions is doubled at each iteration, Pagani spends few iterations on small sub-region lists. Depending on the integrand's dimensionality which determines the initial sub-region list size, this requires less than seven iterations to generate sufficiently large workloads.

As soon as the sub-regions are evaluated, post-processing operations are performed, computing the two-level error-estimate with a dedicated GPU kernel and then accumulating the integral and error estimates with the reduction methods of the Thrust library \cite{thrustLibrary}. This step also includes the relative-error classification. These operations are simple, with coalesced memory accesses and few mathematical and conditional computations. Depending on convergence of the integral estimate and/or memory exhaustion, heuristic-search based classification occurs. This process is not computationally intense and mainly involves repeated reductions and pre-fix scans which are facilitated by Thrust. 

While not comparable to \textsc{evaluate}, the filtering and sub-division computations are consistently more costly than post-processing and sub-region classification. This is caused by memory allocations required when filtering out the finished regions, and two GPU kernels that are responsible for transferring the active regions' attributes to new memory locations and sub-dividing the active regions. 

\subsubsection{Computational Requirements}

The breadth-first expansion of the sub-region tree is generally expected to generate a larger number of regions than any method that relies on a sequential algorithm. Concurrent execution of the serial algorithm on parallel processors limits the sub-region parallelization, but retains concurrency on the function evaluations. This approach generates regions by always choosing a single branch (one for each processor) to expand. Such methods can still generate a larger than necessary number of sub-regions when the precision is high, due to inefficient local termination conditions. In contrast, PAGANI exponentially increases the number of sub-regions most of which are usually filtered out at later iterations. We observe that in the cases where the two-phase method manages to integrate with comparable number of precision-digits as PAGANI, their region requirements are very similar and PAGANI shows moderate speedup. In the cases where PAGANI is slower, in some cases requiring even 100 times more sub-regions, the two-phase method would consistently fail on moderate number of precision digits. 

Figure \ref{fig:RegionsComparisons} illustrates the number of regions generated by the three integration methods. The case of 8D $f_7(x)$ displays stable behavior on both parallel methods though PAGANI suceeds for an additional digit of precision. The speedup on this integral is purely a result of the breadth-first sub-divisions scheme as the heuristic search classification is only invoked at ten digits of precision which results in removing 56\% of the active regions. For the 5D $f_4(x)$ integrand, the heuristic search classification is the main factor behind the performance improvement when integrating for more than 5-digits of precision. The two-phase method and even Cuhre initially require much fewer regions and result in faster execution times (see Figure \ref{fig:FinalTimeComparisons}). For the 6D $f_6(x)$, which mostly consists of discontinuities, we see that the heuristic classification is more aggressive than necessary on low-precision runs, as is evident by the boost on the larger-precision runs (4 digits of precision). Still, PAGANI manages to adapt through the heuristic classification and attains 10 digits of precision while the two-phase method and Cuhre fail much earlier.

\begin{figure}[t]
	\centering
	\includegraphics[width=\linewidth]{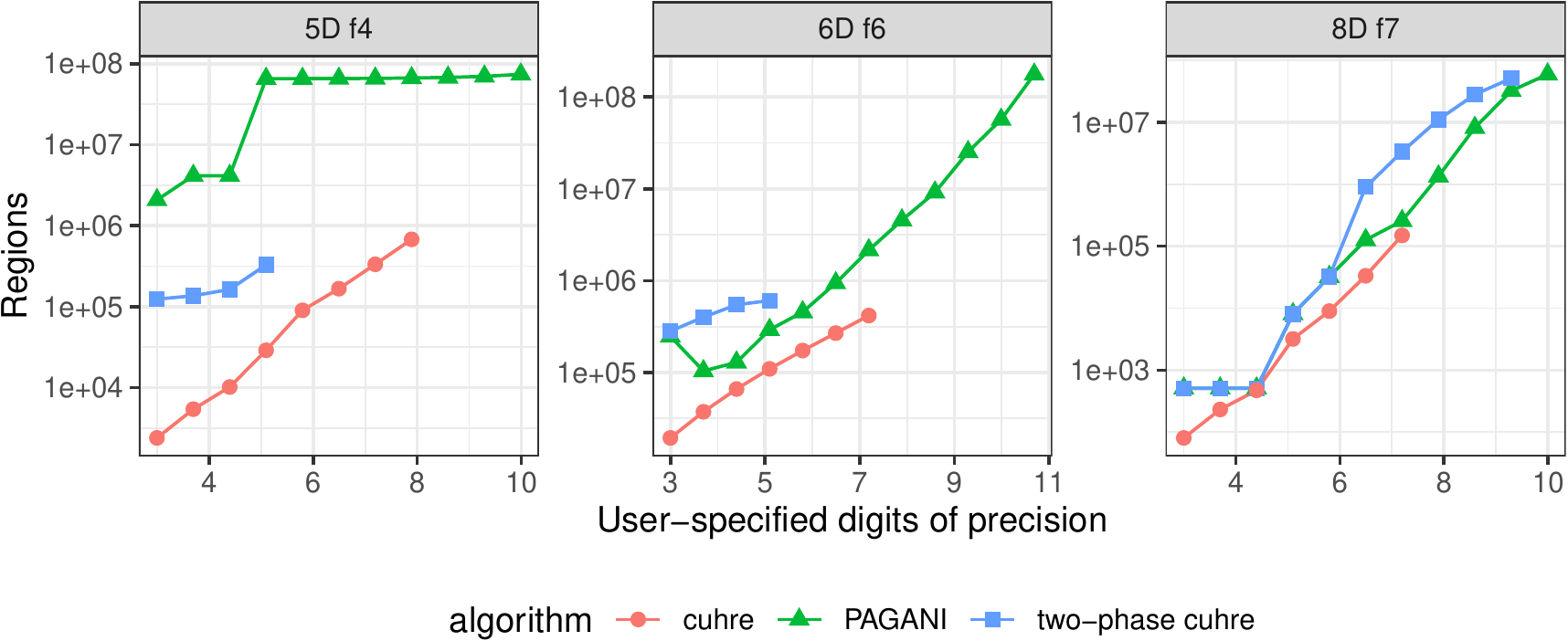}
	\caption{Generated number of sub-regions.}
	\label{fig:RegionsComparisons}
\end{figure}

\subsection{Future Multi-GPU Implementation}

The limiting factor that prevents Pagani from achieving greater precision on certain challenging integrands such as the eight-dimensional $f_5(x)$ and $f_1(x)$ is the available GPU memory. To extend memory resources, we can utilize multiple GPUs to evaluate different partitions of the integration space independently. This strategy yielded promising results in\cite{korig2} where phase I was executed on a single GPU and the generated regions were distributed to multiple GPUs for phase II.
Such an approach, would still be susceptible to load balancing issues. Furthermore, dynamic distribution of the sub-region list to different GPUs through MPI at each iteration is expected to be infeasible. As such, region redistribution would most likely be beneficial either at the beginning of the algorithm, after a set-number of sub-regions is generated, or when GPU memory is exhausted.

\section{Conclusion}

We presented a new Breadth-First cubature algorithm for deterministic adaptive multi-dimensional integration. We used sequential Cuhre and a parallel two-phase GPU adaptation of Cuhre as the basis for comparison. PAGANI showed orders-of-magnitude speedup over sequential Cuhre when the integrand was computationally challenging, whether the function evaluations were costly or the number of required sub-divisions too large. Compared to the two-phase method, the execution time of PAGANI is expected to be comparable on low precision, with some speedup ($4$ to $15$ times faster) observed on moderate digits-of-precision. The critical improvement over the two-phase method is in terms of robustness. PAGANI successfully integrated for up to 10-digits of precision while the two-phase method consistently failed much earlier. PAGANI is expected to be appropriate in all cases where Cuhre and two-phase method are utilized successfully, and most importantly when they fail due to resource exhaustion. 

The performance benefits of PAGANI are predicated on highly-parallel architectures. Increased memory capabilities of emerging architectures or potential multi-GPU execution would increase robustness further. The CUDA implementation we utilized in our experiments is a prototype, with multi-GPU execution and optimization of the GPU kernels for the Volta/Ampere architecture planned for future work. We will also be implementing the PAGANI algorithm in the Kokkos programming model, which would allow increased portability on various GPU architectures. 

\bibliographystyle{unsrt}
\bibliography{bibliography/refs.bib}

\begin{thebibliography}{10}

\bibitem{finance}
Pierre L’Ecuyer.
\newblock Quasi-monte carlo methods with applications in finance.
\newblock {\em Finance and Stochastics}, 13:307--349, 09 2009.

\bibitem{jarosz08thesis}
Wojciech Jarosz.
\newblock {\em Efficient Monte Carlo Methods for Light Transport in Scattering
  Media}.
\newblock PhD thesis, UC San Diego, September 2008.

\bibitem{cosmosis}
J.~Zuntz, M.~Paterno, E.~Jennings, D.~Rudd, A.~Manzotti, S.~Dodelson,
  S.~Bridle, S.~Sehrish, and J.~Kowalkowski.
\newblock Cosmosis: Modular cosmological parameter estimation.
\newblock {\em Astronomy and Computing}, 12:45--59, 2015.

\bibitem{beambeam}
Kamesh Arumugam, Desh Ranjan, Mohammad Zubair, Alexander Godunov, and Bal\v{s}a
  Terzi\'{c}.
\newblock High-fidelity simulation of collective effects in electron beams
  using an innovative parallel method.
\newblock In {\em Proceedings of the Conference on Summer Computer Simulation},
  SummerSim '15, page 1–10, San Diego, CA, USA, 2015. Society for Computer
  Simulation International.

\bibitem{Lepage}
G.~Peter Lepage.
\newblock {VEGAS: AN ADAPTIVE MULTIDIMENSIONAL INTEGRATION PROGRAM}.
\newblock 3 1980.

\bibitem{origDCUHRE}
Jarle Berntsen, Terje~O. Espelid, and Alan Genz.
\newblock Algorithm 698: Dcuhre: An adaptive multidemensional integration
  routine for a vector of integrals.
\newblock {\em ACM Trans. Math. Softw.}, 17(4):452–456, December 1991.

\bibitem{cuba}
T.~Hahn.
\newblock Cuba—a library for multidimensional numerical integration.
\newblock {\em Computer Physics Communications}, 168(2):78--95, 2005.

\bibitem{nCubreIntegration}
Paul {van Dooren} and Luc {de Ridder}.
\newblock An adaptive algorithm for numerical integration over an n-dimensional
  cube.
\newblock {\em Journal of Computational and Applied Mathematics},
  2(3):207--217, 1976.

\bibitem{Berntsen1991AnAM}
J.~Berntsen and T.~O. Espelid.
\newblock An adaptive multidimensional integration routine for a vector of
  integrals.
\newblock {\em ACM Transactions on Mathematical Software}, 1991.

\bibitem{piessens2012quadpack}
R.~Piessens, E.~de~Doncker-Kapenga, C.W. {\"U}berhuber, and D.K. Kahaner.
\newblock {\em Quadpack: A Subroutine Package for Automatic Integration}.
\newblock Springer Series in Computational Mathematics. Springer Berlin
  Heidelberg, 2012.

\bibitem{NAG}
The~NAG Library.
\newblock {The Numerical Algorithms Group (NAG), www.nag.com}.

\bibitem{korig}
K.~{Arumugam}, A.~{Godunov}, D.~{Ranjan}, B.~{Terzic}, and M.~{Zubair}.
\newblock An efficient deterministic parallel algorithm for adaptive
  multidimensional numerical integration on gpus.
\newblock In {\em 2013 42nd International Conference on Parallel Processing},
  pages 486--491, 2013.

\bibitem{dcuhre}
Jarle Berntsen, Terje Espelid, and Alan Genz.
\newblock An adaptive algorithm for the approximate calculation of multiple
  integrals.
\newblock {\em ACM Transactions on Mathematical Software}, 17:437--451, 12
  1991.

\bibitem{concurrentCuba}
T.~Hahn.
\newblock Concurrent cuba, 2014.

\bibitem{korig2}
K.~{Arumugam}, A.~{Godunov}, D.~{Ranjan}, B.~{Terzić}, and M.~{Zubair}.
\newblock A memory efficient algorithm for adaptive multidimensional
  integration with multiple gpus.
\newblock In {\em 20th Annual International Conference on High Performance
  Computing}, pages 169--175, 2013.

\bibitem{actaNumerica}
Michael~Griebel Hans-Joachim~Bungartz.
\newblock {Sparse grids}.
\newblock {\em Acta Numerica}, 13:147--269, 6 2004.

\bibitem{Griebel}
Michael Griebel and Jens Oettershagen.
\newblock Dimension-adaptive sparse grid quadrature for integrals with boundary
  singularities.
\newblock In Jochen Garcke and Dirk Pfl{\"u}ger, editors, {\em Sparse Grids and
  Applications - Munich 2012}, pages 109--136, Cham, 2014. Springer
  International Publishing.

\bibitem{Bungartz}
Hans-Joachim Bungartz and Stefan Dirnstorfer.
\newblock Multivariate quadrature on adaptive sparse grids.
\newblock {\em Computing}, 71(1):89–114, September 2003.

\bibitem{Gerstner}
Thomas {Gerstner} and Michael {Griebel}.
\newblock {Numerical integration using sparse grids}.
\newblock {\em Numerical Algorithms}, 18(3):209--232, January 1998.

\bibitem{Goda2019RecentAI}
T.~Goda and K.~Suzuki.
\newblock Recent advances in higher order quasi-monte carlo methods.
\newblock {\em arXiv: Numerical Analysis}, 2019.

\bibitem{NuyensQMC}
Dirk Nuyens and Ronald Cools.
\newblock Higher order quasi‐monte carlo methods: A comparison.
\newblock {\em AIP Conference Proceedings}, 1281(1):553--557, 2010.

\bibitem{MCANDQMC}
{Russel E.} Caflisch.
\newblock Monte carlo and quasi-monte carlo methods.
\newblock {\em Acta Numerica}, 7:1--49, January 1998.

\bibitem{GMCubatureRules}
A.~C. Genz and A.~A. Malik.
\newblock An imbedded family of fully symmetric numerical integration rules.
\newblock {\em SIAM journal on numerical analysis}, 20(3):580--588, 1983.

\bibitem{twolevelerrorest}
J.~Berntsen.
\newblock Practical error estimation in adaptive multidimensional quadrature
  routines.
\newblock {\em J. Comput. Appl. Math.}, 25(3):327–340, May 1989.

\bibitem{heuristicQuadrature}
Daniel Thuerck, Sven Widmer, Arjan Kuijper, and Michael Goesele.
\newblock Efficient heuristic adaptive quadrature on gpus: Design and
  evaluation.
\newblock In Roman Wyrzykowski, Jack Dongarra, Konrad Karczewski, and Jerzy
  Wa{\'{s}}niewski, editors, {\em Parallel Processing and Applied Mathematics},
  pages 652--662, Berlin, Heidelberg, 2014. Springer Berlin Heidelberg.

\bibitem{multicoreQuadrature}
Giuliano Laccetti, Marco Lapegna, Valeria Mele, and Diego Romano.
\newblock A study on adaptive algorithms for numerical quadrature on
  heterogeneous gpu and multicore based systems.
\newblock In Roman Wyrzykowski, Jack Dongarra, Konrad Karczewski, and Jerzy
  Wa{\'{s}}niewski, editors, {\em Parallel Processing and Applied Mathematics},
  pages 704--713, Berlin, Heidelberg, 2014. Springer Berlin Heidelberg.

\bibitem{qmcGPU}
S.~Borowka, G.~Heinrich, S.~Jahn, S.P. Jones, M.~Kerner, and J.~Schlenk.
\newblock A gpu compatible quasi-monte carlo integrator interfaced to pysecdec.
\newblock {\em Computer Physics Communications}, 240:120–137, Jul 2019.

\bibitem{GenzTest}
Alan Genz.
\newblock Testing multidimensional integration routines.
\newblock In {\em Proc. of International Conference on Tools, Methods and
  Languages for Scientific and Engineering Computation}, page 81–94, USA,
  1984. Elsevier North-Holland, Inc.

\bibitem{thrustLibrary}
Nathan Bell and Jared Hoberock.
\newblock Chapter 26 - thrust: A productivity-oriented library for cuda.
\newblock In Wen mei W.~Hwu, editor, {\em GPU Computing Gems Jade Edition},
  Applications of GPU Computing Series, pages 359--371. Morgan Kaufmann,
  Boston, 2012.

\end{thebibliography}
\nocite{*} 

\end{document}